\newtheorem{theorem}{Theorem}
\newtheorem{lemma}{Lemma}
\newenvironment{proof}{\noindent{\it Proof. }\ignorespaces}
{\hspace*{\fill}$\Box$\par\medskip}
\title{Is this the simplest (and most surprising) sorting algorithm ever?}
\author{Stanley P. Y. Fung %
\footnote{School of Computing and Mathematical Sciences, 
University of Leicester, United Kingdom.
Email: pyf1@le.ac.uk} }
\date{}
\begin{document}
\maketitle

\begin{abstract}
We present an extremely simple sorting algorithm. It may look like it is
obviously wrong, but we prove that it is in fact correct. 
We compare it with other simple sorting algorithms, and analyse some of 
its curious properties.
\end{abstract}

\section{The Algorithm}

Most of us know those simple sorting algorithms like bubble sort very well.
Or so we thought -- 
have you ever found yourself needing to write down the pseudocode of
bubble sort, only to realise that it is not as straightforward as you think
and you couldn't get it right the first time? 
It needs a bit of care to get the loop indices start and 
end at just the right values, not going out of bounds, or to handle some 
flag variables correctly. 
Wouldn't it be nice if there is a simple algorithm with no such hassle?

Here is such an algorithm, that sorts an array $A$ of $n$ elements in 
non-decreasing order. For easier exposition in the proof later on, 
the array is 1-based, so the elements are $A[1], \ldots, A[n]$.

\begin{algorithm}[H]
\caption{ICan'tBelieveItCanSort($A[1..n]$)}
\begin{algorithmic}
\FOR{$i = 1$ \TO $n$}
  \FOR{$j = 1$ \TO $n$}
    \IF{$A[i] < A[j]$} \STATE swap $A[i]$ and $A[j]$ \ENDIF
  \ENDFOR
\ENDFOR
\end{algorithmic}
\label{alg:1}
\end{algorithm}

That's it, really. Just loop over every pair of $(i,j)$ values in the 
standard double-for-loop way, and compare and swap. 
What can possibly be simpler?

The first reaction of someone seeing this algorithm 
might be ``this cannot possibly be correct'', or 
``you got the direction of the inequality wrong and also the loop indices
wrong''. But no, it does sort correctly in increasing order.
The algorithm may give the impression that it sorts in decreasing order 
because it swaps when $A[i]$ is {\it smaller} than $A[j]$. However, unlike
some other similar-looking sorting algorithms, the loops do not constrain 
$j$ to be larger than $i$. In fact, it can be seen from the proof later that 
most of the swaps only happen when $i>j$, so $A[i]<A[j]$ is actually
an inversion that needs swapping.

There is nothing good about this algorithm.
It is slow -- the algorithm obviously runs in $\Theta(n^2)$ time, 
whether worst-case, average-case or best-case. 
It unnecessarily compares all pairs of positions, twice (but see
Section~3).
There seems to be no intuition behind it, and
its correctness is not entirely obvious.
You certainly do not want to use it as a first example to introduce 
students to sorting algorithms.
It is not stable, 
does not work well for external sorting, 
cannot sort inputs arriving online, 
and does not benefit from partially sorted inputs.
Its only appeal may be its simplicity, in terms of lines of code and 
the ``symmetry'' of the two loops.

It is difficult to imagine that this algorithm was not discovered before,
but we are unable to find any references to it.

\section{The Proof}

Although at first counterintuitive, after a few moments of thought 
it should not be difficult to see why the algorithm is correct. 
Nevertheless, we give an overly elaborate proof here.

\begin{theorem}
Algorithm~\ref{alg:1} sorts the array in non-decreasing order.
\end{theorem}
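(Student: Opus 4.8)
The plan is to prove, by induction on the outer-loop counter $i$, a loop invariant asserting that once the outer loop has finished its iteration for a given $i \in \{1,\ldots,n\}$, the prefix $A[1..i]$ is already sorted non-decreasingly and, in addition, $A[i] \ge A[k]$ for every $k$ with $i < k \le n$; equivalently, at that moment $A[i]$ holds (a copy of) the largest value in the whole array. Specialising to $i = n$ gives the theorem. Throughout I use the obvious fact that each step is either a no-op or a swap, so the array always contains the same multiset of values, and in particular the global maximum value is fixed.

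The base case $i = 1$ is easy. The inner loop for $i = 1$ only ever swaps $A[1]$ with $A[j]$ when $A[1] < A[j]$ (the step $j = 1$ does nothing). A one-line secondary induction on $j$ shows that after $j$ has been processed, $A[1] \ge A[2], \ldots, A[1] \ge A[j]$: a swap replaces $A[1]$ by a strictly larger value --- so the entries it already dominated stay dominated --- and deposits the old smaller value into position $j$, while a non-swap means $A[1] \ge A[j]$ already held. Hence after $j = n$ the entry $A[1]$ dominates the whole array, and $A[1..1]$ is trivially sorted, so the invariant holds at $i = 1$.

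For the inductive step assume the invariant after iteration $i-1$. Then, entering iteration $i$, positions $1,\ldots,i-1$ hold the sorted values $a_1 \le \cdots \le a_{i-1}$, position $i$ holds some value $x$ with $x \le a_{i-1}$, and $a_{i-1} \ge A[k]$ for all $k \ge i$. Break iteration $i$'s inner loop into phase A ($j = 1,\ldots,i-1$), the no-op $j = i$, and phase B ($j = i+1,\ldots,n$). The heart of the proof is phase A. Let $p$ be the first index with $a_p > x$, if one exists. A secondary induction on $j$ pins down the configuration reached: if $p$ does not exist then phase A performs no swaps at all; otherwise, after step $j$ (for $p \le j \le i-1$), positions $1,\ldots,p-1$ are unchanged, position $p$ holds $x$, positions $p+1,\ldots,j$ hold $a_p,\ldots,a_{j-1}$, positions $j+1,\ldots,i-1$ are unchanged, and position $i$ holds $a_j$. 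Either way, at the end of phase A (that is, at $j = i-1$), $A[1..i-1]$ is again sorted and $A[i] = a_{i-1}$ --- noting that ``no $p$'' forces $x = a_{i-1}$, and that $x \le a_{i-1}$ makes $a_{i-1}$ the maximum of $\{a_1,\ldots,a_{i-1},x\}$.

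Finally, in phase B we have $A[i] = a_{i-1}$, which by the inductive hypothesis is at least as large as each of $A[i+1],\ldots,A[n]$ (untouched by phase A), so no test $A[i] < A[j]$ with $j > i$ can succeed and phase B does nothing. Hence after iteration $i$: $A[1..i-1]$ is sorted, $A[i] = a_{i-1}$ is $\ge$ every entry of $A[1..i-1]$ (so $A[1..i]$ is sorted) and $\ge$ every entry of $A[i+1..n]$ --- exactly the invariant for $i$. The only place that needs genuine care is the secondary induction for phase A: one must keep the case split between the stage before the first swap (where the possibly-large $x$ just sits at position $i$, larger than the small prefix entries) and the stage after swapping has started, and must check that ties among the $a_k$, or between $x$ and some $a_k$, never spoil sortedness of the prefix. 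Everything else is routine bookkeeping.
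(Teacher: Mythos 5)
Your proposal is correct and follows essentially the same route as the paper: the same invariant (prefix $A[1..i]$ sorted and $A[i]$ the global maximum), the same base case selecting the maximum into $A[1]$, and the same three-phase analysis of the inner loop with the first out-of-order index ($p$ in your notation, $k$ in the paper's) triggering an insertion-sort-style chain of swaps. Your explicit description of the array configuration after each step $j$ of phase A is just a closed-form packaging of the three step-by-step properties the paper maintains, so no substantive difference.
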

\begin{proof}
We show by induction the following property, for $1 \le i \le n$:

\begin{quote}
(${\bf P}_i$): Right after the $i$-th iteration of the outer loop, the first $i$
elements in $A$ are in non-decreasing order, i.e.
$A[1] \le A[2] \le \ldots \le A[i]$;
furthermore $A[i]$ is the maximum of the whole array.
\end{quote}

The correctness of the algorithm clearly will follow from (${\bf P}_n$).

When $i=1$, the algorithm sequentially checks each element in $A[2..n]$,
swapping it into $A[1]$ if it is bigger than $A[1]$. It is clear that by the 
end of this process, $A[1]$ contains the largest element of $A$.
Hence (${\bf P}_1$) is true.

Suppose (${\bf P}_i$) is true, and so right after the $i$-th iteration of the
outer loop, we have $A[1] \le \ldots \le A[i]$. 
Now we consider the $(i+1)$-th iteration and prove (${\bf P}_{i+1}$).
Let $k \in [1,i]$ be the smallest index such that 
$A[i+1] < A[k]$ (and thus $A[k-1] \le A[i+1]$ as long as $k \neq 1$).
If $A[i+1] \ge A[i]$, no such $k$ satisfies the criteria and we set $k=i+1$.
We consider three stages of the inner $j$ loop:

\begin{itemize}
\item When $1 \le j \le k-1$, the algorithm compares $A[i+1]$ with each
of $A[1], \ldots, A[k-1]$. Since $A[i+1]$ is at least as large as each of
them, no swap takes place.

\item Next consider $k \le j \le i$.
(If $k=i+1$, this stage does not exist.)
We argue that (1) every comparison will result in a swap (unless the two
elements are equal, in which case it doesn't matter); (2) after iteration $j$,
$A[1..j]$ is in sorted order; and (3) at each iteration $j \neq i$, 
the element swapped out into position $i+1$ is at least as large as all 
the elements in $A[1..j]$ (including the one just swapped into $A[j]$) 
but not larger than $A[j+1]$.

When $j=k$ (and $\neq i$),
since $A[i+1]<A[k]$ by definition of $k$, the algorithm swaps them. 
For clarity, we use $A[\ ]$ and $A'[\ ]$ to denote the array elements before
and after the swap, so $A'[i+1]=A[k]$ and $A'[k]=A[i+1]$.
We have $A'[k] = A[i+1] \ge A[k-1] = A'[k-1]$
and so $A'[1..k]$ is in sorted order.
Also $A'[i+1] = A[k] > A[i+1] = A'[k]$
and $A'[i+1] = A[k] \le A[k+1] = A'[k+1]$. Thus the three properties are
satisfied.

Similarly, when $j=k+1$ (and $\neq i$), 
$A'[i+1]$ is compared with $A'[k+1]$.
Since $A'[i+1] \le A'[k+1]$, the algorithm swaps them
(unless they are equal). 
We use $A''[\ ]$ to denote the array contents after this swap.
Whether they are swapped or not, we have
$A''[k+1] = A'[i+1] \ge A'[k] = A''[k]$ and so
$A''[1..k+1]$ is in sorted order.
Also $A''[i+1] = A'[k+1] \ge A'[i+1] = A''[k+1]$
and $A''[i+1] = A'[k+1] \le A'[k+2] = A''[k+2]$.
Thus the three properties are again satisfied.

The same happens for each subsequent step. 
Finally, when $j=i$, the same happens regarding property (1), (2) and the
first part of (3). The second part of (3) is not needed since it is only
used to establish the properties of the next iteration within this stage.
Note that by (${\bf P}_i$), 
$A[i]$ is the largest element in $A$, so after this step
$A[i+1]$ must be the largest element in $A$.

\item When $i+1 \le j \le n$: By this point,
$A[1] \le A[2] \le \ldots \le A[i+1]$ and
$A[i+1]$ is the largest element in $A$. Hence all further comparisons 
between $A[i+1]$ and $A[j]$ will not result in any swaps.

\end{itemize}

Thus we established that (${\bf P}_{i+1}$) is true.
\end{proof}

\section{Further comments}

\subsection{Relation to other sorting algorithms}

Sorting algorithms are often classified into different types such as
exchange-based, selection-based, insertion-based, etc (see \cite{TAOCP}).
For reference, this is the standard exchange sort:

\begin{algorithm}[H]
\caption{ExchangeSort($A[1..n]$)}
\begin{algorithmic}
\FOR{$i = 1$ \TO $n$}
  \FOR{$j = i+1$ \TO $n$}
    \IF{$A[i] > A[j]$} \STATE swap $A[i]$ and $A[j]$ \ENDIF
  \ENDFOR
\ENDFOR
\end{algorithmic}
\label{alg:exc}
\end{algorithm}

Algorithm~\ref{alg:1} was found when the author was making up some wrong
sorting algorithms and changed the $j$ loop of Algorithm~\ref{alg:exc} to
be from 1 to $n$, and was shocked to see that it sorts in decreasing order.

While the pseudocode of Algorithm~\ref{alg:1} makes it look like an 
exchange-based sort,
in fact the first iteration of the outer loop ($i=1$) selects the maximum,
and then each of the other iterations works like insertion sort.
Thus in some way it is a combination of selection-based and
insertion-based algorithms.

\subsection{``Improving'' the algorithm}

It can be seen from the proof that, other than the $i=1$ iteration,
the inner $j$ loop needs to be executed from 1 to $i-1$ only; the rest 
of it has no effect.
But the only thing the $i=1$ iteration does is to extract the maximum and
move it to $A[1]$, and clearly there is no reason why a sorting algorithm 
(for sorting in increasing order) would want to do that.
The extraction of the maximum gives the the second part of property
(${\bf P}_i$), but that part is never needed in the proof except to show that
the comparisons of $A[i]$ with any element in $A[i+1..n]$ will not result in 
swaps. But if they do swap, it will only make $A[i]$ larger, and
$A[i+1..n]$ is not covered by (${\bf P}_i$) so 
whatever happen to those elements don't matter.

Thus we can remove those unnecessary iterations, resulting in the 
following ``improved'' algorithm which also sorts correctly and makes fewer 
comparisons and swaps:

\begin{algorithm}[H]
\caption{(``Improvement'' to Algorithm~\ref{alg:1})}
\begin{algorithmic}
\FOR{$i = 2$ \TO $n$}
  \FOR{$j = 1$ \TO $i-1$}
    \IF{$A[i] < A[j]$} \STATE swap $A[i]$ and $A[j]$ \ENDIF
  \ENDFOR
\ENDFOR
\end{algorithmic}
\label{alg:ins}
\end{algorithm}

And we have re-discovered insertion sort!

The standard insertion sort finds the insertion point from the end of the 
sorted region, shifting elements along the way, and stops as soon as the
correct insertion point is found. In contrast,
Algorithm~\ref{alg:ins} checks all elements in the sorted region, starting 
from the beginning, using repeated swaps instead of shifts to move the 
elements, and always goes through the full length of the sorted region.
So Algorithm~\ref{alg:ins} is slower than the standard insertion sort, and 
Algorithm~\ref{alg:1} is even slower.
Nevertheless we prefer our original Algorithm~\ref{alg:1}
for its ``beauty''.

\subsection{Sort in decreasing order}

To sort in non-increasing order instead, one can obviously reverse the
inequality sign in Algorithm~\ref{alg:1}. Alternatively, due to the 
symmetry of $i$ and $j$, swapping the two for loops will achieve the same 
effect.

\section{Best-case and worst-case inputs}

What are the best-case and worst-case inputs of Algorithm~\ref{alg:1}?
Obviously it always makes exactly $n^2$ comparisons, so we consider the
number of swaps. Most sorting algorithms make at most
$n(n-1)/2$ comparisons/swaps. After all, that is the number of pairs
of different elements you can compare, 
and also the maximum number of inversions.
(Recall that an inversion is a pair $(i,j)$ where $i<j$ but $A[i]>A[j]$.)
It would not be surprising that this algorithm would do worse.
Indeed, we prove that it can make more swaps than the maximum number 
of inversions -- but by just 1.

In the rest of this section, we assume all the elements are distinct.
Furthermore, since this is a comparison-based algorithm, 
we can without loss of generality assume the elements are (a permutation of)
$1, 2, \ldots, n$.
Recall from Sections 3.1 and 3.2 that the algorithm can be considered to have 
two phases: the selecting the maximum phase ($i=1$) and the insertion sort
phase ($i=2$ to $n$).

\begin{lemma}
Each swap in the selection phase increases the number of inversions by
exactly one, and each swap in the insertion phase decreases the number of
inversions by exactly one.
\end{lemma}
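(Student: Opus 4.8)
The plan is to reduce the whole statement to a single elementary fact about how one transposition changes the inversion count, and then to observe that the ``correction term'' in that fact vanishes for every swap this algorithm performs. Concretely, \textbf{(Step 1)} I would first prove that, in an array of distinct values, swapping the contents of two positions $p<q$ holding values $u=A[p]$ and $w=A[q]$ changes the number of inversions by
\[
\Delta \;=\; \mathrm{sgn}(w-u)\cdot(1+2c),
\]
where $c$ is the number of indices $r$ with $p<r<q$ whose value $A[r]$ lies strictly between $u$ and $w$. The proof is the routine case split: only pairs containing $p$ or $q$ can change; the pair $\{p,q\}$ flips exactly once, contributing the summand $\mathrm{sgn}(w-u)$; for an index $r$ \emph{outside} $[p,q]$ the pairs $\{r,p\}$ and $\{r,q\}$ change in opposite directions and cancel; and for an index $r$ \emph{strictly between} $p$ and $q$ the pairs $\{p,r\}$ and $\{r,q\}$ either both change, each in the direction $\mathrm{sgn}(w-u)$ (exactly when $A[r]$ is between $u$ and $w$), or neither changes. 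With this, the lemma reduces to checking that $c=0$ for every swap the algorithm makes.

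\textbf{(Step 2: the selection phase, $i=1$.)} A short induction on $j$ --- essentially the $i=1$ analysis already in the proof of the theorem --- shows that just before inner-loop step $j\ge2$, position $1$ holds the largest of the original entries $A[1],\ldots,A[j-1]$, every position $r$ with $1<r<j$ holds one of the remaining (hence strictly smaller) original entries, and positions $j,\ldots,n$ are still untouched. Hence when a swap occurs at step $j$ we have $p=1<q=j$, $u=A[1]<A[j]=w$, and $A[r]<u$ for every intermediate $r$; so no $A[r]$ lies strictly between $u$ and $w$, i.e.\ $c=0$, and by Step~1 the swap increases the inversion count by exactly $\mathrm{sgn}(w-u)=+1$.

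\textbf{(Step 3: the insertion phase, $i\ge2$.)} The proof of the theorem already pins down what iteration $i$ does: writing $v_1<\ldots<v_{i-1}$ for the sorted prefix present at the start of the iteration and $x=A[i]$ for the element being inserted at position $k$ (so $v_{k-1}<x<v_k$), the only swaps occur at steps $j=k,\ldots,i-1$, and just before the swap at step $j$ one has $A[j]=v_j$, $A[i]\in\{x,v_{j-1}\}$ with $A[i]<v_j$, and positions $j+1,\ldots,i-1$ still holding the untouched tail $v_{j+1}<\ldots<v_{i-1}$. (If iteration $i$ makes no swaps, there is nothing to prove.) So $p=j<q=i$, $u=A[j]=v_j>A[i]=w$, and every intermediate position $r$ holds $v_r\ge v_{j+1}>u$; again no value lies strictly between $u$ and $w$, so $c=0$ and by Step~1 the swap decreases the inversion count by exactly $\mathrm{sgn}(w-u)=-1$.

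\textbf{(The main obstacle.)} Step~1 is mechanical, so the real content is knowing, at the instant of each swap, exactly which positions lie strictly between the two swapped positions and what they hold, so that no ``in-between'' value is intermediate. For the insertion phase the proof of the theorem has essentially already done this work; for the selection phase the easy induction of Step~2 suffices; the difficulty is therefore one of careful exposition rather than of a genuinely hard argument. It is worth flagging where the distinctness hypothesis is used: it makes ``between $u$ and $w$'' an open interval, so the strict inequalities $A[r]<u$ (Step~2) and $A[r]>u$ (Step~3) force $c=0$; without distinctness one merely recovers the harmless ``unless the two elements are equal'' proviso already present in the theorem.
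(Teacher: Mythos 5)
Your proposal is correct and follows essentially the same route as the paper's proof: both arguments rest on the observation that every element strictly between the two swapped positions lies outside the value interval of the two swapped elements (all smaller in the selection phase, all larger in the insertion phase), so only the swapped pair itself changes inversion status. The only difference is presentational --- you factor out the general transposition formula $\Delta=\mathrm{sgn}(w-u)(1+2c)$ and verify $c=0$, where the paper argues the cancellation for the intermediate pairs directly.
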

\begin{proof}
In the selection phase, suppose $A[1]$ is being swapped with $A[j]$.
This pair goes from non-inversion to inversion.
Due to how the selection phase works, all numbers
between $A[1]$ and $A[j]$ at this point are smaller than $A[1]$ 
and hence also smaller than $A[j]$.
Thus, for any pair of indices $(i_1, i_2)$ where $i_1 \in \{1,j\}$
and $i_2 \in (1,j)$, its inversion status does not change.
The number of inversions involving pairs where $i_2 > j$ does not change 
either.

Similarly, in the insertion phase, if $A[i]$ and $A[j]$ are being swapped,
all other numbers between $A[i]$ and $A[j]$ are larger than both of them 
due to how the algorithm works. Hence the only change in the number of
inversions comes from the pair itself, which changes from inversion to
non-inversion.
\end{proof}

\begin{theorem}
Algorithm~\ref{alg:1} makes at most $I_{max}+1$ swaps, where
$I_{max} = n(n-1)/2$ is the maximum number of inversions possible in
any input, and this bound is tight.
\end{theorem}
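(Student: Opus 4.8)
The plan is to use the inversion bookkeeping of Lemma~1 to reduce the statement to a short combinatorial inequality about permutations, then prove that inequality and exhibit a matching input. Let $I$ be the number of inversions in the input, and split the swaps performed by Algorithm~\ref{alg:1} into the $s$ swaps of the selection phase ($i=1$) and the $t$ swaps of the insertion phase ($i=2,\ldots,n$). By Lemma~1, after the selection phase the array has $I+s$ inversions, and from then on every swap removes exactly one inversion; since the final array is sorted (Theorem~1) it has no inversions, so $t=I+s$ and the total number of swaps is $s+t=I+2s$. To identify $s$, note that in the selection phase, when the inner loop reaches an index $j\ge 2$ the cell $A[j]$ still holds its original value (position $1$ is the only cell ever overwritten, and $j$ increases), while $A[1]$ holds $\max(A[1],\ldots,A[j-1])$; so a swap occurs at step $j$ exactly when the original $A[j]$ is a left-to-right maximum of the input. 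Position $1$ is a left-to-right maximum too but triggers no swap, hence $s=m-1$, where $m$ is the number of left-to-right maxima of the input. Therefore the total number of swaps equals $I+2(m-1)$, and it suffices to show $I+2m\le I_{max}+3$ for every permutation of $\{1,\ldots,n\}$.

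For that inequality I would argue position by position. For each $j$ let $c_j$ be the number of indices $i<j$ with $A[i]>A[j]$; then $I=\sum_{j=1}^{n}c_j$, each $c_j\le j-1$, and $c_j=0$ precisely when $j$ is a left-to-right maximum. Put $\phi_j=c_j$ if $c_j\ge 1$ and $\phi_j=2$ if $c_j=0$, so that $I+2m=\sum_{j=1}^{n}\phi_j$. Since $\phi_j\le\max(c_j,2)\le\max(j-1,2)$ and the excess $\max(j-1,2)-(j-1)$ equals $2,1,0,0,\ldots$ for $j=1,2,3,4,\ldots$, summation gives $\sum_j\phi_j\le\sum_j(j-1)+3=I_{max}+3$, which with the reduction above yields at most $I_{max}+1$ swaps.

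For tightness I would take the input $(n-1,\,n,\,n-2,\,n-3,\,\ldots,\,2,\,1)$ (with $n\ge 2$): it has exactly two left-to-right maxima, at positions $1$ and $2$, and it is the decreasing permutation with its first two entries transposed, so it has $I_{max}-1$ inversions; by the formula $I+2(m-1)$ it makes $(I_{max}-1)+2=I_{max}+1$ swaps.

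The only step with genuine content is the identity $s=m-1$, and inside it the observation that the cell the inner loop is about to inspect has not yet been disturbed; once that is in hand, Lemma~1 supplies the inversion accounting for free, the counting inequality is elementary, and the extremal example is immediate. A slightly longer alternative to the $\phi_j$ argument would be to describe the array right after the selection phase explicitly --- it is the input with the values at the left-to-right-maximum positions cyclically rotated one step --- which produces $(m-1)(m-2)/2$ forced non-inversions among those positions, and the bound then follows from $(m-1)-(m-1)(m-2)/2\le 1$.
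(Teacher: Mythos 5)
Your proof is correct and takes a genuinely different route from the paper's. The paper fixes the position $j$ of the maximum element $n$ and argues by cases: the selection phase makes at most $j-1$ swaps, and by tracking where $n-1$ lands it shows the array has at most $I_{max}-(j-2)$ inversions when the insertion phase begins, each of whose swaps removes exactly one inversion. You instead derive an \emph{exact} formula for the swap count: your observation that at step $j$ of the selection phase the cell $A[j]$ is still undisturbed while $A[1]$ holds the running maximum correctly identifies the selection-phase swaps with the left-to-right maxima other than position $1$ (using this section's distinctness assumption), so $s=m-1$, and Lemma~1 plus the sortedness of the output turns the total into exactly $I+2(m-1)$. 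Your per-position bound $\phi_j\le\max(j-1,2)$, whose total excess over $\sum_j(j-1)=I_{max}$ is $2+1=3$, then gives the inequality, and your tightness example checks out ($m=2$, $I=I_{max}-1$). The exact formula is the stronger deliverable: it immediately yields the $I+2(n-1)$ bound of the paper's next theorem since $m\le n$, and examining when equality holds in $\phi_j\le\max(j-1,2)$ (forcing $c_2=0$, $c_3\in\{0,2\}$, and $c_j=j-1$ for $j\ge 4$) recovers the paper's unproved remark that exactly two inputs attain $I_{max}+1$. The paper's case analysis avoids the left-to-right-maximum bookkeeping but delivers only the one-sided bound.
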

\begin{proof}
Consider where the maximum element $n$ is in the input.
\begin{itemize}
\item
If $n$ is in $A[1]$, then the selection phase makes no swaps. 
The insertion phase makes at most $I_{max}$ swaps, since each swap
removes one inversion.
\item
If $n$ is in $A[2]$, then the selection phase makes one swap, and
the insertion phase makes at most $I_{max}$ swaps. Thus the total
is at most $I_{max}+1$.
\item
Suppose $n$ is in $A[j]$, $j \ge 3$. At step $j$ of the selection phase,
it will be swapped to $A[1]$.
Consider where the second largest number $n-1$ is just before this swap. 
If it is in $A[1..j-1]$, then it must in fact be in $A[1]$. This is
because after step $j-1$, $A[1]$ must contain the largest number in 
$A[1..j-1]$ due to how the selection phase works.
At step $j$, this number $n-1$ is swapped to position
$j$ of $A$. But the maximum-inversion configuration is
$[n, n-1, \ldots, 2, 1]$ which puts $n-1$ in position 2.
For each position that $n-1$ is put further away from its ``ideal'' position 2,
the number of inversions is reduced from $I_{max}$ by at least 1. 
Thus at this point the number of inversions is at most
$I_{max}-(j-2)$.
No further swaps will happen in this phase, so this is also the 
number of inversions at the end of the selection phase.

Otherwise, if $n-1$ is in $A[j+1..n]$ just before step $j$, then
step $j$ will swap some other number into position $j$, but again no 
further swaps will take place, and so $n-1$ is even more positions
away from position 2. Thus the number of inversions can only be smaller.

In the selection phase the algorithm makes at most $j-1$ swaps,
and in the insertion phase it makes at most $I_{max}-(j-2)$ swaps.
Thus the total is at most $I_{max}+1$.
\end{itemize}

Two inputs attain this bound: $[n-1, n, n-2, n-3, \ldots, 2, 1]$, or
$[n-2, n-1, n, n-3, n-4, \ldots, 2, 1]$. In other words, it places the 
largest two/three numbers in increasing order, followed by the rest in
decreasing order.
These are the only two inputs attaining the tight bound; this can be
proved by a slightly tighter analysis for $j \ge 4$ along similar lines.
\end{proof}

If the input has few inversions, Algorithm~\ref{alg:1} is ``adaptive'' to 
it, in a way:

\begin{theorem}
Algorithm~\ref{alg:1} makes at most $I + 2(n-1)$ swaps, where $I$ is
the number of inversions in the input, and this bound is tight.
\end{theorem}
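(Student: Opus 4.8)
The plan is to bound the number of swaps in the two phases separately, using Lemma 1 as the main accounting tool. Recall that each swap in the selection phase increases inversions by exactly one, and each swap in the insertion phase decreases inversions by exactly one. So if the selection phase makes $s$ swaps, the number of inversions at the end of the selection phase is exactly $I + s$, and hence the insertion phase makes exactly $I + s$ swaps. The total is therefore $I + 2s$, and it suffices to show $s \le n-1$.

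The selection phase runs the inner loop for $j = 1$ to $n$, and each iteration performs at most one swap, so trivially $s \le n-1$ (no swap ever occurs at $j = 1$ since $A[1] < A[1]$ is false). This immediately gives the upper bound $I + 2(n-1)$. So the ``hard'' part is not the upper bound at all — it is exhibiting an input that attains it, i.e. an input with exactly $I$ inversions on which the selection phase makes exactly $n-1$ swaps. For the selection phase to swap at every step $j = 2, \ldots, n$, each new element $A[j]$ examined must exceed the current running maximum sitting in $A[1]$; that is, $A[2], A[3], \ldots, A[n]$ must form an increasing sequence (each a new prefix-maximum), which forces the input to be $A = [n, 1, 2, 3, \ldots, n-1]$ — wait, that has the maximum already in $A[1]$. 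Let me reconsider: we need $A[1] < A[2] < A[3] < \cdots < A[n]$ to be false at the start but every comparison to trigger. Actually the condition is that at step $j$, the value currently in $A[1]$ (which is $\max(A_{\text{orig}}[1..j-1])$) is less than $A[j]$; this holds for all $j \ge 2$ exactly when $A[1] < A[2] < \cdots < A[n]$ in the original array — but then the array is already sorted, with $I = 0$, and indeed the selection phase makes $n-1$ swaps while the insertion phase also makes $n-1$ swaps, for a total of $2(n-1) = I + 2(n-1)$.

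So the tight example is simply the already-sorted input $[1, 2, \ldots, n]$: one verifies directly (or via the Lemma: $I = 0$, selection phase makes $n-1$ swaps so inversions become $n-1$, insertion phase removes them in $n-1$ swaps, total $2(n-1)$). The writeup will therefore be: (1) invoke Lemma 1 to get total swaps $= I + 2s$ where $s$ is the number of selection-phase swaps; (2) observe $s \le n-1$ since the inner loop has $n$ iterations and $j=1$ never swaps; (3) conclude the bound; (4) present $[1,2,\ldots,n]$ as a tight instance and check it attains $I + 2(n-1) = 2(n-1)$. The only subtlety to get right is step (1): that the insertion phase makes \emph{exactly} as many swaps as there are inversions at its start — this needs that the insertion phase, by the correctness theorem, ends with zero inversions, combined with the Lemma's ``decreases by exactly one'' per swap. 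I expect the main obstacle to be purely expository: making sure the reader believes the selection phase can independently create up to $n-1$ inversions without any cancellation, which is exactly the content of Lemma 1's first half, so little new work is needed.
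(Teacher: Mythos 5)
Your proposal is correct and follows essentially the same route as the paper: bound the selection-phase swaps by $n-1$, use Lemma~1 to count the inversions created and then removed, and exhibit the already-sorted array as the tight instance. Your only refinement is observing that the total is \emph{exactly} $I+2s$ (where $s$ is the number of selection-phase swaps), which also lets you argue the sorted input is the unique extremal case, but this does not change the substance of the argument.
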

\begin{proof}
The algorithm can make at most $n-1$ swaps in the selection phase, raising
the number of inversions to at most $I + (n-1)$. Then the insertion phase
makes at most $I+(n-1)$ swaps, since each swap removes one inversion.
Thus the total number of swaps is at most $I+2(n-1)$.

The bound is attained when the input is already sorted, i.e.
$A = [1, 2, \ldots, n]$. The number of inversions increases from 0 to
$n-1$ in the selection phase.
\end{proof}

\begin{theorem}
Algorithm~\ref{alg:1} makes at least $n-1$ swaps, and this bound is tight.
\end{theorem}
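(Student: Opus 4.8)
The plan is to avoid tracking individual swaps and instead combine the two-phase description of the algorithm (the selection phase $i=1$, then the insertion phase $i=2,\ldots,n$) with Lemma~1. Write $s$ and $t$ for the number of swaps made in the selection phase and the insertion phase respectively, so that the total number of swaps is $s+t$, and let $B$ denote the state of the array immediately after the selection phase finishes.

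The key observation is that $B$ already contains many inversions. By property (${\bf P}_1$) established in the proof of Theorem~1, after the first iteration of the outer loop $A[1]$ holds the maximum of the array; since we may assume the input is a permutation of $1,\ldots,n$, this means $B[1]=n$. Hence every one of the $n-1$ pairs $(1,j)$ with $2\le j\le n$ is an inversion of $B$, so $B$ has at least $n-1$ inversions. Now I would apply Lemma~1 to the insertion phase: each of its swaps decreases the number of inversions by exactly one, and by Theorem~1 the array is sorted (zero inversions) at the end, so $t$ equals the number of inversions of $B$. Therefore $t\ge n-1$, and $s+t\ge t\ge n-1$, which is the claimed lower bound.

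For tightness I would exhibit the input $A=[n,1,2,\ldots,n-1]$. The selection phase performs no swap at all, because $A[1]=n$ is already the maximum, so $B=A$ has exactly $n-1$ inversions; by Lemma~1 the insertion phase then performs exactly $n-1$ swaps, for a total of $n-1$. I do not expect a genuine obstacle here: the only point needing a little care is the trivial boundary case $n=1$ (no swaps, and the bound reads $0$), and the one mild ``insight'' required is simply to bound the total from below using the insertion phase alone, after noticing that the selection phase is forced to drop $n$ into position $1$ and thereby create $n-1$ inversions that the insertion phase must subsequently remove one at a time.
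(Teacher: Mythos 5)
Your proposal is correct and follows essentially the same route as the paper: both lower-bound the total by the insertion-phase swaps alone, noting that the selection phase leaves the maximum in $A[1]$ and hence at least $n-1$ inversions, which Lemma~1 forces the insertion phase to remove one swap at a time, with the same tight example $[n,1,2,\ldots,n-1]$. You merely spell out the appeal to Lemma~1 and the zero-inversion terminal state more explicitly than the paper does.
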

\begin{proof}
Just after the selection phase, the maximum element is in $A[1]$.
Hence there are at least $n-1$ inversions at this point. 
Thus the insertion phase requires at least $n-1$ swaps.

The only input attaining this bound is
$A = [n, 1, 2, \ldots, n-1]$. The selection phase makes no swaps and
the insertion phase makes $n-1$ swaps.
\end{proof}

\end{document}